\newcommand{\virginie}[1]{#1}
\newcommand{\jerome}[1]{#1}
\newcommand{\matthieu}[1]{#1}
\renewcommand{\H}{{\mathrm{H}}}
\newcommand{\w}{{\mathrm{w}}}
\newcommand{\mav}{{\mathrm{MAV}}}
\newcommand{\opt}{{\mathrm{opt}}}
\newcommand{\pav}{{\mathrm{PAV}}}
\newcommand{\calA}{{\mathcal{A}}}
\newcommand{\np}{{\mathrm{NP}}}
\newcommand{\cc}{{\mathrm{CC}}}
\newcommand{\naturals}{{\mathbb{N}}}
\newcommand{\reals}{{\mathbb{R}}}
\newcommand{\calR}{{\mathcal{R}}}
\newcommand{\fsc}{{f\text{-}\mathrm{sc}}}
\newtheorem{theorem}{Theorem}[section]
\newtheorem{proposition}{Proposition}[section]
\newtheorem{corollary}{Corollary}[theorem]
\newtheorem{lemma}[theorem]{Lemma}
\newtheorem{definition}{Definition}[section]
\newtheorem{example}{Example}
\definecolor{winered}{rgb}{0.6,0.1,0.1}
\newcommand{\myemph}[1]{{\color{winered}\emph{#1}}}
\theoremstyle{remark}
\newtheorem*{remark}{Remark}
\newcommand{\expe}[1]{\mathbb{E}[#1]}
\title{Online Approval Committee Elections\footnote{To appear in the Proceedings of IJCAI 2022.}}
\author[1,2]{Virginie Do}
\author[1]{Matthieu Hervouin}
\author[1]{Jérôme Lang}
\author[3]{Piotr Skowron}
\affil[1]{CNRS, LAMSADE, PSL, Université Paris Dauphine, France}
\affil[2]{Facebook AI Research}
\affil[3]{University of Warsaw, Poland}
\date{}
\begin{document}

\maketitle
\begin{abstract}
Assume $k$ candidates need to be selected.  The candidates appear over time. Each time one appears, it must be immediately selected or rejected---a decision that is made by a group of individuals through voting.
Assume the voters use approval ballots, i.e., for each candidate they only specify whether they consider it acceptable or not.
This setting can be seen as a voting variant of choosing $k$ secretaries.
Our contribution is twofold.
(1) We assess to what extent the committees that are computed online can proportionally represent the voters. (2) If a prior probability over candidate approvals is available, we show how to compute committees with maximal expected score.
\end{abstract}

\section{Introduction}\label{sec: 1}

In the vast majority of elections, the set of candidates is known upfront. Yet, there are contexts where candidates appear over time. A paradigmatic example is hiring for a job: candidates come every day to pass an interview, they are evaluated by members of a jury, {\em and then it must be decided immediately whether to hire them or not}.
When we must hire only one candidate and when the evaluation is performed by a single agent (human or algorithm), we get the classic {\em secretary problem}. The variant where several employees must be hired, is called the {\em multiple secretary problem}. These problems have numerous variants (for instance, depending on whether we want to optimize the rank or the value of the selected candidate, whether the distribution over candidates' values is known, etc.). However, when the candidates are evaluated by a set of voters, we obtain a {\em voting} version of the {\em secretary problem}, or equivalently, an {\em online} version of {\em multiwinner elections} (also called {\em committee elections}). 

A typical example is the recruitment of a team of researchers for a research unit, from a pool of candidates who are being interviewed one at a time. In this example, the jury member may have different backgrounds and may have heterogeneous preferences over candidates. A second example is another hiring scenario where the voters are explicit criteria (e.g., skills or demographic
attributes such as gender).

In generalized secretary problems, each candidate is usually evaluated by a single number \citep{babaioff2008online}.
In voting, numerical evaluations are not very convenient, and simpler types of ballots are typically used. In this paper we assume each voter submits an approval ballot. Deciding whether to approve or disapprove the current candidate is cognitively easy and the landscape of approval-based committee rules (ABC rules) is relatively well understood. In particular, ABC rules are well studied for \emph{proportionality}, which guarantees that voters' approvals are fairly represented in the elected committee. In contexts where voters are evaluation criteria, it ensures that these are well covered by the set of hired candidates.


Formally, we have a set of $n$ voters; at each time $t$, a new candidate $c_t$ is observed; $c_t$ is then approved or disapproved (possibly after being interviewed) by each voter; and we have to decide immediately whether to include $c_t$ in the committee. We have to select a committee of size $k$, and exactly $m$ candidates can be observed. 

If we could wait until all candidates have been interviewed, then we would be in the classic setting of approval-based committee elections. In this setting there is a plethora of well-understood rules with different distinctive properties \citep{lac-sko:abc-survey}; we would then pick one of these rules, say $f$, and compute its outcome. We cannot do this, because if candidate $c_t$ appears at time $t$, then it must be decided at time $t$ whether to hire it or not.\footnote{We could consider intermediate contexts where we can wait some amount of time before deciding to hire a candidate or not, but in this first study we will simply assume that the decision to hire a candidate or not must 
be done immediately and irrevocably. This is often realistic: a good candidate has good chances to find another job if not hired immediately.}  Still, the rule $f$ can serve as a reference: the set of candidates that would have been computed by $f$ if we had been able to wait can be considered the optimal set of winners, and can be used for measuring the quality of an online selection algorithm.

We consider two paradigms for evaluating rules.

First, we examine ABC rules $f$ that aim at maximizing certain scoring functions $\fsc$. Given a fixed function $\fsc$, we evaluate online rules by comparing the scores of returned committees with the scores of the optimal ones. We explain how the existing mechanisms for the multiple secretary problem can be applied in order to obtain rules that perform well.
Yet, our first contribution lies in designing optimal algorithms for the case, where a prior distribution over candidate approvals is available. 
The goal becomes to design algorithms that maximize the expected score of the selected committees. We show that for selected scoring functions $\fsc$ or under some assumptions, this can be done in polynomial time.

Second, we look at two {\em axioms of proportionality}. These properties require that each group of voters with cohesive preferences must be represented in the selected committee proportionally to its size. We give a polynomial-time computable online selection policy that satisfies the axiom of {\em proportional justified representation}~\citep{pjr17}. We further show that the stronger axiom of {\em extended justified representation} is not satisfiable in the online setting, but it can be approximated. We give two such approximation algorithms, one which gives the best possible approximation but is computationally inefficient, and the other one, which can be computed in polynomial time.


After discussing related work (Section \ref{sec:related}), we define the online committee elections problem (Section \ref{sec:preminimaries}). Then, we discuss the construction of policies maximizing the expected score (Section \ref{sec:welfare}), and the construction of policies providing some exact or approximate proportionality guarantees (Section \ref{sec:prop}). We conclude in Section \ref{sec:conclu}.

 \section{Related Work} \label{sec:related}

\paragraph{Online selection problems}
Our setting is close to generalized secretary problems \citep{babaioff2008online}, where the goal is to hire the best possible subset of candidates among a finite set of candidates arriving one at a time. A candidate's value is revealed upon arrival,
and the hiring decision must be taken immediately and cannot be changed afterwards. The connection between our work and this class of problems is detailed in Section \ref{sec:worstcase}. \citet{yu2019bi} consider a bi-criteria secretary problem with multiple choices, where each criterion can be seen as a voter, the subset of choices as the elected committee, and their objective function as the multiwinner approval voting objective. We consider instead a multiplicity of voters, and various objectives corresponding to general Thiele rules in multiwinner voting. In the single secretary problem variant of \citep{bearden2005multi}, there are multiple independent attributes, which can be seen as voters in a single-winner election. In the context of search engines, \cite{panigrahi2012online} aim to find a diverse set of items from an input stream, by maximizing a coverage function of multiple features. In our committee election framework, items can be seen as candidates and features as voters, yet their objective function is different from the committee scoring functions we consider, and proportional representation is not studied.

\paragraph{Social choice in online settings} Our work is mainly related to the study of proportional representation in committee elections, and in particular approval-based committee elections, which are surveyed in \citep{lac-sko:abc-survey}. In recent years, there has been increased interest in studying online versions of voting problems. Proportionality is studied in \citep{dey2017proportional} who formalize voting streams, a setting in which alternatives are fixed, but voters arrive in an online manner, which is the opposite to ours. In \citep{freeman2017fair}, the sets of voters and alternatives are fixed, but the valuations of each voter for an alternative varies over time. Utilities are defined at each timestep as the cumulative reward of each agent given past decisions, and the goal is to maximize Nash social welfare. \citet{lackner2020perpetual} consider a similar setting with ordinal preferences instead of cardinal valuations, and study voting rules that weight agents according to their past satisfaction. \citet{hossain2021fair} address the partial observability of voters' preferences. 

While these works study (repeated) single-winner elections, the only existing work on online multiwinner elections to our knowledge is \citep{oren2014online}. The difference with our setting is that they consider online random arrival of voters rather than candidates, and they do not study proportionality axioms. \citet{do2021online} also study a close online committee selection problem to ours, yet a major difference is the absence of voters in their case. Proportionality is then defined based on multiple demographic attributes and a distance to target proportions on these attributes. \virginie{In independent 
work, \citet{banerjee2022proportionally} study a similar setting of fair online allocation where each public good can be assimilated to an election candidate. While they focus on a quantitative notion of proportional fairness, we study welfare guarantees and qualitative proportionality axioms.}

Recent studies address fairness in online versions of other public decision-making problems, such as dynamic proportional rankings \citep{israel2021dynamic} and participatory budgeting \citep{lackner2021fairness}. 

\section{Preliminaries}
\label{sec:preminimaries}

For each $i \in \naturals$ we write $[i]$  to denote the set $\{1, \ldots, i\}$. By $\H(i)$ we denote the $i$-th harmonic number, i.e., $\H(i) = \sum_{j = 1}^i \nicefrac{1}{j}$. By $\w(\cdot)$ we denote the inverse function\footnote{We have $w(i) = \exp(W (\ln (i))$ where $W$ is the Lambert function, i.e. the inverse multivalued function of $x \mapsto x e^x.$} of $x \mapsto x^x$, i.e., $\w(i) = x$ if $x^x = i$; clearly $\w(i) = O(\log(i))$. Further, it holds that $\log(i) = O\left(\w(i)^2\right)$. 

An \myemph{approval-based election} (in short, an election) is a triple \jerome{$E = (C, N, k, (A_i)_{i \in N})$}, where $C = \{1, \ldots, m \}$ is the set of \myemph{candidates}, $N = \{1, 2, \ldots, n\}$ is the set of \myemph{voters}, $k$ is the desired size of the committee, \jerome{and for each $i \in N$,  $A(i) \subseteq C$ is the approval ballot associated with $i$, that is,} the set of candidates that $i$ finds acceptable. Conversely, we let $N(c) = \{i \in N \colon c \in A(i)\}$ denote the set of voters who approve candidate $c$.

We refer to $k$-elements subsets of $C$ as to size-$k$ \myemph{committees}. An \myemph{approval-based committee election rule} (in short, an ABC rule) is a function $\calR$ that takes as input an election $E = (C, N, k)$ and returns a nonempty set of committees; we call the elements of $\calR(E)$ winning committees. Typically we are interested in selecting a single winning committee, but we allow for ties.

An \myemph{online ABC rule} is an algorithm that iterates over the candidates according to the sequence $c_1, c_2, \ldots, c_m$, and in each step makes the decision whether to include a candidate at hand, $c_t$, in the winning committee, or not. When making such a decision we assume that the algorithm does not know the preferences of the voters over the candidates $c_{t'}$ with $t' > t$. In other words, we assume the candidates appear one after another over time. When a candidate $c_t$ appears, the voters' preferences regarding $c_t$ are revealed, and the algorithm needs to make an irrevocable decision of whether $c_t$ is selected or not.   

In the following sections, we evaluate winning committees $W \in \calR(E)$ based on the approval ballots $A(i)$ of the voters, either assigning a value to $W$ to measure aggregated satisfaction (Section \ref{sec:welfare}), or analyzing the proportionality axioms satisfied by $W$ (Section \ref{sec:prop}). Importantly, in the online setting we consider, the approval ballots are not available beforehand, but only once all candidates have been seen and approvals revealed. It is possible to analyze an ABC rule \emph{ex-ante}, by measuring the quality of $W$ in terms of $\mathbb{E}[f(|A(i) \cap W|)]$ for some $f$, as we do in Section \ref{sec:welfare}. It is also possible to evaluate it \emph{ex-post}, once all approval ballots are available and a committee has been elected, as we do in Section \ref{sec:prop}.

\section{Maximizing Aggregated Satisfaction} \label{sec:welfare}

In this section we look at the problem of maximizing the aggregated voters' satisfaction. Consider a voter $i$, who approves $r$ members of the elected committee~$W$, i.e., $r = |A(i) \cap W|$. Given a function $f\colon \naturals \to \reals$, we define the $f$-utility of $i$ from $W$ as $f(r)$. An \myemph{$f$-Thiele method}~\citep{thi:j:approval-OWA,lac-sko:abc-survey} is an ABC rule that maximizes the total $f$-utility of the voters: given an election $E = (C, N, k)$ it elects committees $W$ that maximize the following score:
\begin{align*}
\fsc(W) = \sum_{i \in N} f\left(|A(i) \cap W| \right) \text{.}
\end{align*}
Examples of Thiele methods commonly studied in the literature include
\begin{inparaenum}[(1)]
\item Multiwinner Approval Voting rule (MAV), with $f_{\mav}(r) = r$,
\item Proportional Approval Voting (PAV) with $f_{\pav}(r) = \H(r)$, and
\item Approval Chamberlin--Courant rule (CC), with $f_{\cc}(r) = \min(1, r)$.
\end{inparaenum}

\subsection{Unknown Distributions} \label{sec:worstcase}


For $\mav$, the problem of finding the best approximate committee can be casted as a \emph{multiple choice secretary problem}~\citep{kleinberg2005multiple}. This is because we can define the value of a candidate $c$ as $N(c)$, the number of voters approving $c$, and because $f_{\text{MAV}}$ is additive in the values. 

More generally, for all concave utility functions, we can directly apply the results for the \emph{submodular secretary problem} \citep{bateni2013submodular}. This includes not only $\mav$ but also CC and PAV, because $f_{\text{CC}}$ and $f_{\text{PAV}}$ are submodular set functions.
We obtain an online ABC rule with a constant-factor approximation guarantee, which works as follows. 
It first divides the sequence of candidates into $k$ roughly equal-size parts---the size of each part is between $\lfloor \nicefrac{m}{k} \rfloor$ and $\lceil \nicefrac{m}{k} \rceil$. From each part we select exactly one candidate as follows. Consider the $i$-th part, and assume the set of $(i-1)$ candidates $W_{i-1}$ has been already selected. To select the $i$-th candidate we first observe the first $\lceil \nicefrac{m}{ke} \rceil$ candidates in the $i$-th part of the sequence, and find one, call it $a_i$, that maximizes $f(W_{i-1} \cup \{a_i\})$. Next, we select the first candidate $c$ such that $f(W_{i-1} \cup \{c\}) \geq f(W_{i-1} \cup \{a_i\})$. If we found no such candidate in the $i$-th part of the sequence, we pick the last candidate from the $i$-th part, and move to the next part. This way, we select exactly one candidate from each part of the sequence. By the result of \cite{bateni2013submodular}, this algorithm returns a committee $W$ such that $\expe{\fsc(W)} \geq \frac{1 - \nicefrac{1}{e}}{7}\fsc(W_{\opt}) \approx 0.09\fsc(W_{\opt})$. 


The above results are valid for unknown distributions of values of (subset of) candidates, and do not depend on how these values are defined. The interest of the following Section \ref{sec:dynprog} is to leverage the specificity of our problem, where the objective function $\fsc$ depends on multiple voters and is additively decomposable in their utilities. Precisely, the input at each time $t$ is richer than the single value $N(c)$, and more specific than an abstract function $\fsc(S)$. Rather, we observe at each time the approvals of each voter for the arriving candidate $c_t$, i.e., $n$ multiple binary valuations. Using this specific problem structure, and an additional assumption of prior knowledge of approval probabilities, we show how dynamic programming can be used to go beyond the off-the-shelf solutions offered by generalized secretary problems.

\subsection{Known Distributions} \label{sec:dynprog}

We now assume that we have a known prior distribution: {\em we know the probability $p(i) = p(x \in A_i)$ that voter $i$ approves the next observed candidate $x$}. These probabilities may depend on $i$, and the events $x \in A_i$ and $x \in A_j$ for different $i \neq j$ need not be independent.


Whether it is realistic to assume we know $p(i)$ depends on the context. If we have a database of past instances on similar problems, then we can compute the approval frequency of voters (or of a given voter, if she appears in several instances and the database is not anonymous). 

For the sake of simplicity, we now assume that (UI) $p(i)$ has the same ({\em uniform}) value $p$ for all voters, and that the events that voter $i$ approves or not candidate $j$ are {\em independent}.
Assumption (UI) is classic in social choice. Note that (UI) is not necessary for some of our results to hold: what we need is only that the probability $P_j$ that a candidate is approved by $j$ voters is polynomial-time computable. Under (UI), $P_j = {n \choose j}p^j (1-p)^{n-j}$.

A policy is a function $\pi$ that decides, at each step when a new candidate comes and once the approvals and disapprovals for this candidate are observed, whether the candidate should be selected or not. More rigorously, the policy maps a {\em state} to a decision; we postpone the definition of a state because it varies with the rule used.

A history is a sequence of candidates together with associated votes and actions: $h = \langle (c_t, N(c_t), a_t), t = 1, \ldots, q\rangle$ for $q \leq m$. $c_t$ is the candidate observed at time $t$; $N(c_t)$ is the set of voters who approve $c_t$; and $a_t \in \{yes, no\}$ (decision to selecting or not $c_t$).
For instance, $h = \langle (a, \{1,3,4\}, yes), (b, \{1,2\}), no) \rangle$ is the history where $a$ is observed, approved by voters 1, 3 and 4, selected, then $b$ is observed, etc.
A history is {\em terminal} if either $q = m$ or the number of selected candidates is $k$. A policy is {\em safe} if all its induced histories select exactly $k$ candidates. Provided that $m \geq k$, safe policies exist. Each terminal history $h$ of a safe policy has an associated set of selected candidates $W(h)$ of cardinality $k$ and a reward $\fsc(W(h))$. A policy induces a probability distribution over histories, which in turns allows to define its expected score. An {\em optimal safe policy} is one with maximal expected score: 
\begin{align*}
    V^* = \max_{\pi} \mathbb{E}_{h \sim \pi}\left[ \fsc (W(h)) \right]
\end{align*}
We now show that we can express an optimal safe policy as a mapping from any state to an action $yes$ or $no$, and that it can be computed by a dynamic programming algorithm. The exact definition of a state varies with the multiwinner voting rule.
 A state is {\em full} if it corresponds to $k$ candidates having been selected already, and {\em tight} if its associated number of candidates already selected plus the number of candidates yet to be seen is equal to $k$.

\subsubsection{Multiwinner Approval Voting}

 
With $\mav$,  we have $f_{\text{MAV}}(W) = \sum_{c \in W} |N(c)|$. 
In this case a state is a triple  $s = (\alpha, \beta, \gamma)$,  where
\begin{itemize}
\item $\alpha \in \{1, \ldots, m\}$ is the number of candidates seen so far, including the currently observed candidate.
\item $\beta \in \{0, \ldots, \min(k,\alpha-1)\}$ is the number of candidates selected so far.
\item $\gamma \in \{0, \ldots, n\}$ is the number of voters who approve the current candidate.
\end{itemize}

The number of states is $(n+1)(k+1)\left(m+1 - \frac{k}{2}\right)$. 
A state is full if $\beta = k$ and tight if $\beta + m-\alpha +1 = k$. A safe policy must map every full state to $no$ and every tight state to $yes$. Note that for $\alpha = m$, a state obtained by following a safe policy is either tight or full.

Let $V^*(\alpha,\beta,\gamma)$ be the expected score of an optimal safe policy from $(\alpha,\beta,\gamma)$.

Under (UI), $V^*$ satisfies the Bellman equations
$$\begin{array}{l}
V(\alpha,\beta,\gamma, no) = \sum_{j=0}^n P_j V^*(\alpha+1,\beta,j)
\\
V(\alpha,\beta,\gamma, yes) = \gamma + \sum_{j=0}^n P_j V^*(\alpha+1,\beta+1,j)
\\
V^*(\alpha,\beta,\gamma) = \max\left( V(\alpha,\beta,\gamma, no), V(\alpha,\beta,\gamma, yes) \right)
\end{array}$$

$V(\alpha,\beta,\gamma, yes)$ and $V(\alpha,\beta,\gamma, no)$ are the expected utilities obtained when selecting (resp., not selecting) the current candidate in state $(\alpha,\beta,\gamma)$ and then following an optimal safe policy. (For a detailed justifications, see the Appendix). Thus, the optimal safe policy can be computed by dynamic programming by iterating on all states from $\alpha = m$ down to $\alpha = 1$. There are $O(n^2km)$ states and each state needs a summation over $n$ terms.


\begin{proposition}\label{prop:antemav}
For $\mav$, under assumption (UI), an optimal safe policy can be computed in time $O(n^2km)$.
\end{proposition}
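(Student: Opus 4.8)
The plan is to verify that the dynamic program sketched just before the statement is correct and that its running time is as claimed. There are really two things to establish: (i) that the Bellman recursion indeed computes $V^*$, the value of an optimal safe policy, when the state is taken to be the triple $(\alpha,\beta,\gamma)$; and (ii) that evaluating this recursion over all states costs $O(n^2 k m)$.

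For (i), I would first argue that the triple $(\alpha,\beta,\gamma)$ is a sufficient statistic for the continuation problem under $\mav$ and assumption (UI). The key observation is that the $\mav$ score $f_{\mav}(W)=\sum_{c\in W}|N(c)|$ is additive over the selected candidates, so the contribution of past decisions to the final reward is already ``banked'': the future reward depends on the past only through how many more candidates may still be chosen, i.e.\ through $\beta$ (together with $\alpha$, which bounds how many candidates remain). Under (UI) the number of approvers of each future candidate is an i.i.d.\ draw with $\Pr[\gamma_{t}=j]=P_j={n\choose j}p^j(1-p)^{n-j}$, independent of the history, so no further information about the past is relevant. Hence an optimal policy can be taken to depend only on the current state. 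I would then justify the three displayed equations: the boundary conditions (full states map to $no$, tight states to $yes$) follow from the definition of a safe policy; for a non-boundary state, choosing $no$ yields no immediate reward and transitions to $(\alpha+1,\beta,j)$ with probability $P_j$, while choosing $yes$ yields immediate reward $\gamma$ and transitions to $(\alpha+1,\beta+1,j)$ with probability $P_j$; taking the max over the two actions is optimal by the usual exchange/backward-induction argument on the finite horizon $\alpha=m,\dots,1$. A clean way to write this is induction on $m-\alpha$: at $\alpha=m$ every reachable state is tight or full so $V^*$ is determined, and the inductive step is exactly the Bellman equation.

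For (ii), I would count: $\alpha$ ranges over $m$ values, $\beta$ over $O(k)$ values, $\gamma$ over $n+1$ values, giving $O(nkm)$ states (the paper's more precise count $(n+1)(k+1)(m+1-k/2)$ is a refinement, but $O(nkm)$ suffices). For each state the recursion requires one summation of $n+1$ terms to compute $V(\cdot,no)$ and another for $V(\cdot,yes)$, i.e.\ $O(n)$ work per state, for a total of $O(n^2 k m)$. One should note the $P_j$ are precomputed once in $O(n)$ time (or more, depending on how binomials are computed, but this is dominated). Filling the table in order of decreasing $\alpha$ ensures every right-hand side $V^*(\alpha+1,\cdot,\cdot)$ is already available. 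Reading off $\pi(\alpha,\beta,\gamma)=\arg\max_a V(\alpha,\beta,\gamma,a)$ adds only constant work per state.

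The main obstacle is not the complexity bookkeeping, which is routine, but making the sufficiency-of-state argument fully rigorous: one must be careful that restricting attention to policies that depend only on $(\alpha,\beta,\gamma)$ loses no optimality, and that ``safety'' (selecting exactly $k$ candidates on every branch) is compatible with the recursion — in particular that the set of reachable non-full, non-tight states under a safe policy is exactly the index set over which we iterate, and that the boundary rules never force an infeasible action (this uses $m\ge k$ and the bound $\beta\le\min(k,\alpha-1)$). I expect the write-up to handle this by the backward induction above, where safety is automatic because tight states are mapped to $yes$ and full states to $no$, so exactly $k$ selections occur on every terminal history.
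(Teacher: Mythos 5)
Your proposal matches the paper's own proof in essence: the paper also proves the claim by backward induction on $\alpha$ (via restrictions of the policy to states with index at least $\alpha$), justifying the same Bellman equations through the additivity of the $\mav$ score (the immediate reward $\gamma$) and the i.i.d.\ transition probabilities $P_j$, with safety enforced by mapping full states to $no$ and tight states to $yes$, and the same $O(n)$-work-per-state count over $O(nkm)$ states giving $O(n^2km)$. The sufficiency-of-state concern you flag is handled in the paper exactly as you anticipate, by the backward induction on partial policies rather than a separate lemma.
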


\subsubsection{Chamberlin-Courant approval voting}

Recall that $f_{\text{CC}}(W) = |\{i \in N: W \cap A(i) \neq \emptyset\}|$.

\begin{proposition}\label{prop:antecc}
For CCAV, under assumption (UI), an optimal policy can be computed in time $O(n^3km)$.
\end{proposition}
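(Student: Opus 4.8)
The plan is to mimic the dynamic programming approach used for MAV, but with a richer state space that records enough information about the already-selected committee to evaluate the marginal contribution of a newly selected candidate to $\fsc_{\cc}$. The key observation is that for Chamberlin--Courant, the score $f_{\cc}(W) = |\{i \in N : W \cap A(i) \neq \emptyset\}|$ depends on the \emph{set} of already-covered voters, not just on their number---unlike MAV where the contribution of a new candidate is simply $\gamma$. However, under assumption (UI), the identities of the covered voters are exchangeable, so it suffices for the state to track the \emph{number} of voters already covered by the committee built so far. Concretely, I would define a state as a tuple $s = (\alpha, \beta, \delta, \gamma)$ where $\alpha$ is the number of candidates seen so far (including the current one), $\beta$ is the number of candidates selected so far, $\delta \in \{0, \ldots, n\}$ is the number of voters covered by the $\beta$ selected candidates, and $\gamma$ is the number of voters approving the current candidate. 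This gives $O(n^2 k m)$ states.

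The next step is to write the Bellman equations. For the ``no'' action, $V(\alpha, \beta, \delta, \gamma, no) = \sum_{j=0}^n P_j\, V^*(\alpha+1, \beta, \delta, j)$, since not selecting the candidate leaves $\delta$ unchanged and the next candidate is approved by $j$ voters with probability $P_j$. For the ``yes'' action, selecting the current candidate (approved by $\gamma$ voters) newly covers some of the $n - \delta$ currently uncovered voters. The subtlety is that among the $\gamma$ approvers of the current candidate, the number who were \emph{not already covered} is itself random: under (UI), the $\gamma$ approvers are a uniformly random $\gamma$-subset of $N$, independent of which $\delta$ voters are already covered, so the number of newly covered voters follows a hypergeometric distribution. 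Hence $V(\alpha, \beta, \delta, \gamma, yes) = \sum_{\ell} \Pr[\text{$\ell$ newly covered} \mid \gamma, \delta]\left( (\delta + \ell) \cdot [\text{terminal correction}] + \sum_{j=0}^n P_j\, V^*(\alpha+1, \beta+1, \delta+\ell, j)\right)$, where the immediate reward is accounted for only at terminal states (or, equivalently, one can charge the increment $\ell$ immediately, as in the MAV treatment, since $f_{\cc}$ is also additively decomposable over the ``first time a voter is covered'' events). The hypergeometric probability is $\binom{\delta}{\gamma - \ell}\binom{n - \delta}{\ell} / \binom{n}{\gamma}$, which is polynomial-time computable. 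Then $V^*(\alpha, \beta, \delta, \gamma) = \max(V(\ldots, no), V(\ldots, yes))$, safe states (full/tight) are handled exactly as for MAV, and the optimal policy reads off the argmax.

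For the complexity bound: there are $O(n^2 k m)$ states, and for the ``yes'' action at each state we sum over $\ell \in \{0, \ldots, \gamma\}$ (at most $n+1$ values) the hypergeometric weight times an inner sum over $j \in \{0, \ldots, n\}$ of size $n+1$. Naively this is $O(n^2)$ work per state, giving $O(n^4 k m)$, which is worse than the claimed $O(n^3 k m)$. To recover the stated bound I would precompute, for each $(\alpha, \beta)$, the vector $(U(\delta'))_{\delta'}$ where $U(\delta') = \sum_{j=0}^n P_j\, V^*(\alpha+1, \beta+1, \delta', j)$ in total time $O(n^2)$ per $(\alpha,\beta)$ pair (amortized $O(n)$ per state), so that evaluating the ``yes'' action at a state costs only the $O(n)$ hypergeometric convolution over $\ell$; summed over all $O(n^2 k m)$ states this is $O(n^3 k m)$, and the precomputation over $O(nkm)$ pairs $(\alpha,\beta)$ at $O(n^2)$ each is also $O(n^3 k m)$. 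I would state this precomputation explicitly as the mechanism achieving the bound.

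The main obstacle is the correctness argument for the ``yes'' transition---specifically, justifying that the number of newly covered voters is hypergeometric and, crucially, that this distribution is \emph{independent} of the past history given the state $(\alpha, \beta, \delta, \gamma)$. This requires the exchangeability consequence of (UI): conditioned on the current candidate being approved by exactly $\gamma$ voters, that set of approvers is uniform over all $\gamma$-subsets, and conditioned on exactly $\delta$ voters being covered by the selected committee, the identity of the covered set is also uniform over $\delta$-subsets and independent of future approval events. One must check that these two uniform sets can be taken jointly independent, so that the overlap count is genuinely hypergeometric; this follows because approvals of distinct candidates are i.i.d.\ under (UI). A careful backward-induction proof on $\alpha$, parallel to the MAV argument in the appendix, then establishes that $V^*(\alpha, \beta, \delta, \gamma)$ is the optimal expected score of a safe $\alpha$-partial policy, and that the greedy argmax policy attains it. The remaining steps---handling full and tight states, and verifying the base case $\alpha = m$---are routine and identical in structure to the MAV case.
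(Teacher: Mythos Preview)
Your state is not sufficient, and this is a genuine gap rather than a cosmetic difference from the paper. You define $\gamma$ as the \emph{total} number of voters approving the current candidate, and then treat the number of those approvers that fall among the $n-\delta$ uncovered voters as a latent hypergeometric variable to be integrated out. But that quantity is not latent: when $c_t$ arrives, the full set $N(c_t)$ is revealed, so the decision-maker observes exactly how many of the approvers are currently uncovered. A policy that conditions only on $(\alpha,\beta,\delta,\gamma_{\text{total}})$ is therefore a strict coarsening of the admissible policy class. Concretely, take a state with one covered voter and a candidate approved by exactly one voter: if the approver is the already-covered voter, ``yes'' yields zero marginal CC score, whereas if the approver is uncovered, ``yes'' yields one. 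An optimal policy will in general act differently in these two cases; your policy cannot. So your dynamic program computes the optimal value within a restricted class, not the unrestricted optimum the proposition asserts.

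The paper sidesteps this by choosing the right sufficient statistic: it keeps $\delta$ as the number of \emph{unsatisfied} voters and lets $\gamma\in\{0,\dots,\delta\}$ be the number of \emph{unsatisfied} voters approving the current candidate (the approvals of already-satisfied voters are irrelevant for CC and can be discarded). With this choice the ``yes'' transition is deterministic---the immediate reward is exactly $\gamma$ and the new count of unsatisfied voters is $\delta-\gamma$---and the only randomness is the next $\gamma'$, which under (UI) is $\mathrm{Binomial}(\delta',p)$ with $p(i,\delta')=\binom{\delta'}{i}p^i(1-p)^{\delta'-i}$. No hypergeometric averaging, no precomputation trick: $\Theta(n^2km)$ states with an $O(n)$ sum per state gives $O(n^3km)$ directly. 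Your exchangeability argument is correct as a statement about priors, but it answers the wrong question; once you replace $\gamma_{\text{total}}$ by the observed $\gamma_{\text{uncovered}}$, the rest of your backward induction (safe/full/tight handling, base case, optimality of the argmax) goes through exactly as you describe.
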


The proof is similar to that of Proposition \ref{prop:antemav}, except that the state space is larger. The marginal value of a candidate $x$ when the current set of selected candidates is $S$ is 1 if some of the voters who have no approved candidate in $S$ approves $x$, and 0 otherwise. This means that to be able to determine the marginal value of a new candidate, it is necessary to know the number of voters, among those who have disapproved all candidates selected so far, who approve the currently observed candidate. So now a state is a tuple $s = (\alpha, \beta,\gamma, \delta)$, where $\alpha$ and $\beta$ are as before, $\delta \in \{0, \ldots, n\}$ is the number of all ``unsatisfied'' voters, i.e., those that approve no candidate in the current selection, and $\gamma \in \{0, \ldots, \delta\}$ is the number of those unsatisfied voters who approve the current candidate.

The optimal policy is computed by dynamic programming, iterating on all states, with the Bellman equations
$$\begin{array}{l}
V(\alpha,\beta,\gamma,\delta, no) = \sum_{i=0}^\delta p(i,\delta) V^*(\alpha+1,\beta,i,\delta)\\
V(\alpha,\beta,\gamma, \delta, yes) = \\
 \qquad \gamma + \sum_{i=0}^{\delta - \gamma} p(i,\delta-\gamma) V^*(\alpha+1,\beta+1,i,\delta-\gamma) \\
V^*(\alpha,\beta,\gamma,\delta) = \max\left( V(\alpha,\beta,\gamma,\delta, no), V(\alpha,\beta,\gamma, \delta,yes) \right)
\end{array}$$
where $p(i,\delta) = {\delta \choose i}p^i (1-p)^{\delta-i}$.
Now there are $\Theta(n^2km)$ states, therefore the algorithm runs in $O(n^3km)$.

\subsubsection{PAV and general Thiele rules}

The states used for CCAV are no longer sufficient: to know the marginal gain for voter $i$, of the current candidate, relatively to the current selection, we must store the number of candidates already selected approved by $i$.

The number of states is now in the order of $mk^n$. The dynamic programming algorithm still works but runs in time exponential in $n$. Consequently, we get polynomial-time computability if the number of voters is constant.

\begin{theorem}\label{prop:antepav}
For all Thiele rules, including PAV, $f_{CCAV}$, if the number of voters is constant then an optimal safe policy can be computed in polynomial time.
\end{theorem}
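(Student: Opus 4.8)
The plan is to extend the dynamic-programming schemes of Propositions~\ref{prop:antemav} and~\ref{prop:antecc}, enlarging the state so that it records, for \emph{every} voter, exactly the information needed to evaluate the marginal contribution of the next accepted candidate. Since $\fsc(W)=\sum_{i\in N}f(|A(i)\cap W|)$ is additively decomposable over voters and, for any Thiele rule, voter $i$'s utility depends only on the \emph{number} $r_i=|A(i)\cap W|$ of selected candidates that $i$ approves, the appropriate notion of state is $s=(\alpha,\beta,\vec r,S)$, where $\alpha\in\{1,\dots,m\}$ indexes the currently observed candidate, $\beta\le k$ is the number of candidates selected so far, $\vec r=(r_1,\dots,r_n)\in\{0,\dots,k\}^n$ stores for each voter $i$ the number of already-selected candidates approved by $i$, and $S\subseteq N$ is the set of voters who approve the current candidate. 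The number of such states is $O\!\big(m\,k\,(k+1)^n\,2^n\big)$, which is polynomial in the input whenever $n$ is a constant (recall $k\le m$).

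First I would fix the transitions and the safety handling: a full state ($\beta=k$) is mapped to $no$ and a tight state ($\beta+m-\alpha+1=k$) to $yes$, as before, so the induced policy is safe. For a non-forced state, accepting the current candidate produces the immediate reward $\Delta(\vec r,S)=\sum_{i\in S}\big(f(r_i+1)-f(r_i)\big)$ and moves to $(\alpha+1,\beta+1,\vec r\,',S')$ with $r'_i=r_i+1$ for $i\in S$ and $r'_i=r_i$ otherwise; rejecting produces reward $0$ and moves to $(\alpha+1,\beta,\vec r,S')$. Under (UI) the approval set $S'$ of the next candidate is drawn independently of the history, with probability $q(S')=p^{|S'|}(1-p)^{n-|S'|}$, so the Bellman equations for a non-forced $s$ are
\begin{align*}
V(s,no) &= \textstyle\sum_{S'\subseteq N} q(S')\, V^*(\alpha+1,\beta,\vec r,S'),\\
V(s,yes) &= \Delta(\vec r,S) + \textstyle\sum_{S'\subseteq N} q(S')\, V^*(\alpha+1,\beta+1,\vec r\,',S'),\\
V^*(s) &= \max\big(V(s,no),\,V(s,yes)\big),
\end{align*}
with $V^*\equiv 0$ on terminal states (all $m$ candidates seen and $\beta=k$). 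One then computes $V^*$ by iterating $\alpha$ from $m$ down to $1$, and at $\alpha=1$ averages over the first candidate's approval set.

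The correctness proof is the backward induction already used for $\mav$, with two extra ingredients. (i) Telescoping: along any terminal history $h$ the accumulated rewards sum to $\sum_{i\in N}\big(f(r_i^{\mathrm{fin}})-f(0)\big)$; after the harmless normalization $f(0)=0$ (which shifts $\fsc$ by the constant $n f(0)$ and hence leaves every optimal policy unchanged), this equals $\fsc(W(h))$, so maximizing expected accumulated reward is the same as maximizing $\expe{\fsc(W(h))}$. (ii) Sufficiency of the state: under (UI) future approval sets are independent of the past, and both the future marginals and the safety constraints depend on the history only through $(\alpha,\beta,\vec r,S)$, so an optimal policy may be taken state-based and its value is the $V^*$ computed above. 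Finally, each of the $O\!\big(m\,k\,(k+1)^n 2^n\big)$ Bellman updates is a sum over the $2^n$ sets $S'$, giving total running time $O\!\big(m\,k\,(k+1)^n 4^n\big)$, polynomial in $m$ and $k$ for constant $n$. (The argument goes through for any prior whose joint distribution $q(\cdot)$ over the $2^n$ approval sets is available, not only under (UI).)

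I do not anticipate a real difficulty: the only points needing care are the normalization $f(0)=0$ that makes the telescoping identity recover $\fsc$, and confirming that the state-space blow-up is confined to the factor $(k+1)^n 2^n$, which is constant exactly when $n$ is. The structural fact doing the work is the per-voter additive decomposability of Thiele scores combined with the fact that each voter's utility is a function of a single integer $r_i\in\{0,\dots,k\}$.
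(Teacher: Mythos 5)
Your proposal is correct and follows essentially the same route as the paper: the paper's (much terser) argument likewise enlarges the dynamic-programming state to record, for each voter, the number of already-selected candidates she approves, observes that the state space is then of order $mk^n$, and concludes polynomial-time computability when $n$ is constant. Your write-up merely supplies the details the paper leaves implicit (the explicit Bellman transitions, the telescoping of marginal gains with the harmless normalization $f(0)=0$, and the safety handling), so there is no substantive difference in approach.
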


Small values of $n$ are realistic: we can think of a small jury, or of the interpretation of voters as criteria.






A subclass of Thiele rules for which the problem is still tractable consists of rules
such that the number of values of the score vector is bounded by a constant. This is the case for MAV and CCAV, 
but also for other rules such as truncated $\pav$, defined by the vector $(1, \nicefrac12, 0, \ldots, 0)$.

\section{Proportionality} \label{sec:prop}

In this section we focus on the concept of proportionality. Our goal is to design online ABC rules which would guarantee each minority of the voters the right to decide about a part of the elected committee.

Recall that $k$ is the committee size.
For an integer $\ell \in [k]$ we say that a group of voters $S \subseteq N$ is \myemph{$\ell$-cohesive} if \begin{inparaenum}[(1)]
\item it is large enough, $|S| \geq \ell \cdot \nicefrac{n}{k}$, and
\item its members approve of at least $\ell$ common candidates, $|\bigcap_{i \in S} A(i)| \geq \ell$.
\end{inparaenum}
We extend this notion, and define an approximate variant of $\ell$-cohesiveness. Given an $\alpha > 1$ we say that a group $S$ is $\alpha$-$\ell$-cohesive if 
\begin{inparaenum}[(1)]
\item $|S| \geq \alpha \cdot \ell \cdot \nicefrac{n}{k}$, and
\item $|\bigcap_{i \in S} A(i)| \geq \ell$.
\end{inparaenum}


The two notions of proportionality that are commonly considered in the literature are proportional justified representation (PJR)~\citep{pjr17}, and extended justified representation (EJR)~\citep{justifiedRepresentation}. Below we define their approximate variants. 

\begin{definition}[Proportional justified representation]
Given an $\alpha > 1$ we say that a committee $W$ satisfies an $\alpha$-Proportional Justified Representation ($\alpha$-PJR) if for each $\ell \in [k]$ and each $\alpha$-$\ell$-cohesive group of voters $S$ it holds that $|\bigcup_{i \in S} A(i) \cap W| \geq \ell$.
\end{definition}

Analogously, we define the axiom of $\alpha$-EJR.

\begin{definition}[Extended justified representation]
Given an $\alpha > 1$ we say that a committee $W$ satisfies an $\alpha$-Extended Justified Representation ($\alpha$-EJR) if for each $\ell \in [k]$ and each $\alpha$-$\ell$-cohesive group of voters $S$ there exists a voter $i \in S$ who approves of at least $\ell$ committee members, i.e.,  $|A(i) \cap W| \geq \ell$.
\end{definition}

We say that a committee election rule satisfies $\alpha$-PJR if each committee returned by the rule satisfies $\alpha$-PJR. Analogously, we define what it means that a rule satisfies $\alpha$-EJR. These axioms form a hierarchy: if a rule satisfies $\alpha$-EJR then it also satisfies $\alpha$-PJR.
If a rule satisfies $\alpha$-EJR (respectively, $\alpha$-PJR) for $\alpha = 1$ then we simply say that the rule satisfies EJR (respectively, PJR). EJR is a very strong axiom and for the time being it is known to be satisfied only by PAV~\citep{justifiedRepresentation} and Rule X~\citep{pet-sko:laminar}. Further, Sequential Phragm\'en's Rule satisfies $2$-EJR~\citep{skowron:prop-degree}.

\subsection{Proportional Justified Representation}

Somehow surprisingly, it appears that the axiom of PJR can be satisfied in the online setting by the following Greedy Budgeting Rule. Each voter is initially given $1$ dollar. When a candidate $c \in C$ arrives we look if the voters who approve~$c$ have at least $\nicefrac{n}{k}$ dollars in total. If so, we add $c$ to the committee and ask the voters from $N(c)$ to pay $\nicefrac{n}{k}$. The properties of the algorithm do not depend on how spread the cost of $\nicefrac{n}{k}$ among the voters from $N(c)$, but a fair policy would suggest to do it as evenly as possible. This way the rule would resemble the method of equal shares~\citep{pet-sko:laminar,pierczynski2021proportional}.

Since the voters have in total $n$ dollars, and buying each candidate costs $\nicefrac{n}{k}$, it is clear that the rule cannot select more than $k$ candidates. If it picks less, we can add the last candidates that appear, so that exactly $k$ of them are selected.

\begin{theorem}
The Greedy Budgeting rule satisfies PJR.
\end{theorem}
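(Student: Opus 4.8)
The plan is to argue directly from the budgeting dynamics: every candidate admitted consumes exactly $\nicefrac{n}{k}$ dollars from its approvers, the voters collectively hold $n$ dollars, and a cohesive group that is "under-represented" must still have unspent money, which forces the rule to buy one of its commonly-approved candidates. Concretely, suppose for contradiction that $W$ is the committee output by Greedy Budgeting and that some $\ell$-cohesive group $S$ violates PJR, i.e.\ $|\bigcup_{i\in S} A(i)\cap W| \le \ell-1$. Since $|S| \ge \ell\cdot\nicefrac{n}{k}$, the total initial endowment of $S$ is at least $\ell\cdot\nicefrac{n}{k}$. Each dollar a voter in $S$ spends is spent on buying some candidate in $A(i) \subseteq \bigcup_{j\in S}A(j)$, and more precisely on a candidate that actually got added to $W$; so the total amount spent by members of $S$ is at most (number of distinct committee members approved by someone in $S$)$\,\times\,\nicefrac{n}{k} \le (\ell-1)\cdot\nicefrac{n}{k}$. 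Hence at the end of the run the voters in $S$ jointly retain at least $\ell\cdot\nicefrac{n}{k} - (\ell-1)\cdot\nicefrac{n}{k} = \nicefrac{n}{k}$ dollars.

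Next I would use cohesiveness condition (2): $|\bigcap_{i\in S}A(i)| \ge \ell \ge 1$, so there is at least one candidate $c^\star$ approved by \emph{all} of $S$. I need to locate a moment in the online process where such a candidate was available but the group still had full buying power. The cleanest route: since $|\bigcap_{i\in S}A(i)|\ge \ell$ and $S$ ends up with at least $\nicefrac{n}{k}$ unspent dollars, consider the $\ell$ candidates in $\bigcap_{i\in S}A(i)$ in the order they arrive. At most $\ell-1$ of them can be in $W$ (they lie in $\bigcup_{i\in S}A(i)\cap W$), so at least one of them, call it $c^\star$, arrived and was \emph{not} selected. When $c^\star$ arrived, the group $S$ had spent at most $(\ell-1)\cdot\nicefrac{n}{k}$ in total over the whole run, hence at most that much by that time, so $S$ still held at least $\nicefrac{n}{k}$ dollars at the arrival of $c^\star$. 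But $S \subseteq N(c^\star)$ since $c^\star \in \bigcap_{i\in S}A(i)$, so the approvers of $c^\star$ held at least $\nicefrac{n}{k}$ dollars when $c^\star$ arrived — which is exactly the condition under which the rule buys $c^\star$. This contradicts $c^\star \notin W$.

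The step I expect to require the most care is the bookkeeping that ties "money spent by $S$" to "committee members approved by $S$": I must make sure the inequality (amount spent by $S$) $\le (\ell-1)\cdot\nicefrac{n}{k}$ is airtight, including the subtlety that a voter only pays when a candidate she approves is \emph{actually added}, that each added candidate is charged exactly $\nicefrac{n}{k}$ split among its approvers (so the charge attributable to $S$ for a single committee member is at most $\nicefrac{n}{k}$ regardless of the splitting policy), and that charges for committee members not approved by anyone in $S$ contribute $0$ to $S$'s expenditure. A secondary subtlety is the padding step: the theorem statement is about the committee \emph{after} we top it up with arbitrary last-arriving candidates to reach size exactly $k$; since padding only enlarges $W$, it can only help $|\bigcup_{i\in S}A(i)\cap W| \ge \ell$, so PJR is preserved, but this should be remarked explicitly. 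Once these accounting points are nailed down, the contradiction with the purchase condition at $c^\star$ closes the argument.
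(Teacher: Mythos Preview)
Your proposal is correct and follows essentially the same approach as the paper: both argue by contradiction, use the accounting that $S$ can have spent at most $(\ell-1)\cdot\nicefrac{n}{k}$ (since voters only pay for approved committee members and $|\bigcup_{i\in S}A(i)\cap W|\le \ell-1$), and conclude that $S$ retains at least $\nicefrac{n}{k}$, forcing any commonly-approved candidate to be bought upon arrival. The only cosmetic difference is that the paper concludes \emph{all} $\ell$ candidates in $\bigcap_{i\in S}A(i)$ would have been selected (directly contradicting $|\bigcup_{i\in S}A(i)\cap W|<\ell$), whereas you single out one unselected $c^\star$ and contradict the purchase condition at its arrival; these are equivalent endings of the same argument.
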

\begin{proof}
Consider an election $E = (C, N, k)$, and towards a contradiction suppose the committee $W$ returned by the Greedy Budgeting Rule does not satisfy PJR. Let $S$ be an $\ell$-cohesive group such that $|\bigcup_{i \in S} A(i) \cap W| < \ell$.

Each time we select a candidate, we ask the voters to pay exactly $\nicefrac{n}{k}$. Since $|\bigcup_{i \in S} A(i) \cap W| \leq \ell-1$ we asked the voters from $S$ to pay at most $(\ell-1) \cdot \nicefrac{n}{k}$. Since $|S| \geq \ell \cdot \nicefrac{n}{k}$ we get that the voters from $S$ have at least $\nicefrac{n}{k}$ dollars at each step of the rule. Consequently, each time when a candidate from $\bigcap_{i \in S} A(i)$ appears, these voters have enough money to buy it.
As a result, each candidate from $\bigcap_{i \in S} A(i)$ would be selected. There are at least $\ell$ such candidates. This gives a contradiction and completes the proof.
\end{proof}




\subsection{An Online Algorithm with $\H(k)$-EJR}\label{sec:ogca}

We now move to the case of extended justified representation (EJR). We start by defining the Online Greedy Cohesive algorithm (OGCA), and next we will prove that OGCA satisfies $\H(k)$-EJR \matthieu{(note that OGCA favors large groups of voters and does not satisfy JR)}.

Consider a candidate $c \in C$ that arrives. \matthieu{If there exists a group of voters $S \subseteq N(c)$ with $|S| \geq \H(k) \cdot \ell\cdot\nicefrac{n}{k}$ such that each voter from $S$ approves less than $\ell$ candidates selected so far, then OGCA accepts~$c$}. Otherwise, $c$ is rejected. If the rule were to select less than $k$ candidates, the candidates that arrived last are accepted so that the committee seats are filled.

\begin{theorem}\label{thm:logkejr}
For each election $E$, OGCA returns a size-$k$ committee that satisfies $\H(k)$-EJR.
\end{theorem}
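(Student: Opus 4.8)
The plan is to prove two things separately: that OGCA returns a committee of size exactly $k$, and that this committee satisfies $\H(k)$-EJR.

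The size claim is immediate from the rule's definition: whenever fewer than $k$ candidates have been accepted by the time the stream ends, the rule fills the remaining seats with the last-arriving candidates, and the acceptance condition can never push the count above $k$ because once $k$ candidates are selected the rule simply stops accepting (there is nothing left to fill). So I would dispatch this in one or two sentences.

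For the main claim, I would argue by contradiction. Suppose $W$ is the output committee and there is an $\ell \in [k]$ and an $\H(k)$-$\ell$-cohesive group $S$ — so $|S| \geq \H(k)\cdot\ell\cdot\nicefrac{n}{k}$ and $|\bigcap_{i\in S}A(i)| \geq \ell$ — such that \emph{every} voter $i\in S$ approves strictly fewer than $\ell$ members of $W$. Pick the $\ell$ common candidates $c_1,\dots,c_\ell \in \bigcap_{i\in S}A(i)$ and look at the first one of them, say $c_j$, in arrival order. I want to show OGCA should have accepted $c_j$, contradicting that every voter in $S$ ends with fewer than $\ell$ approved committee members. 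At the moment $c_j$ arrives, each voter $i\in S$ approves some number $r_i < \ell$ of already-selected candidates (since $r_i$ only grows over time and its final value is $<\ell$). The delicate point is that the acceptance test asks for a \emph{subgroup} $S'\subseteq N(c_j)$ and a level $\ell'$ with $|S'|\geq \H(k)\cdot\ell'\cdot\nicefrac{n}{k}$ and every voter in $S'$ approving fewer than $\ell'$ selected candidates. We have $S\subseteq N(c_j)$, but the voters in $S$ have varying counts $r_i$, so $S$ itself need not witness the test with a uniform $\ell'$. The fix is a pigeonhole / averaging argument over levels: partition (or rather, threshold) $S$ by the value $r_i$. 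For each level $t \in \{0,1,\dots,\ell-1\}$ let $S_{\geq t} = \{i\in S : r_i < t\}$... more usefully, consider for each $t\in[\ell]$ the set $S_t = \{i \in S : r_i \le t-1\} = \{i\in S: r_i < t\}$; these are nested, $S_1 \subseteq S_2 \subseteq \dots \subseteq S_\ell = S$. I claim some $S_t$ satisfies $|S_t| \geq \H(k)\cdot t\cdot\nicefrac{n}{k}$, which together with "every voter in $S_t$ approves fewer than $t$ selected candidates" makes $(S_t,t)$ a valid witness for accepting $c_j$. To get this, suppose not: then $|S_t| < \H(k)\cdot t \cdot \nicefrac{n}{k}$ for all $t\in[\ell]$. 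Writing the complement sizes or summing appropriately, one counts $|S| = \sum_{i\in S} 1$ against $\sum_{t}$ of layer sizes; the standard trick is that $|S| = |S_\ell|$ and more refined, $\ell\cdot|S| \ge \sum_{t=1}^\ell |S_t|$ is the wrong direction, so instead I would weight: since the $i$-th "hardest" voter... The cleanest route: for $i\in S$ with count $r_i$, it lies in $S_t$ exactly for $t > r_i$, i.e. for $\ell - r_i$ values of $t$ in $[\ell]$ (namely $t = r_i+1,\dots,\ell$). Hence $\sum_{t=1}^\ell |S_t| = \sum_{i\in S}(\ell - r_i) \ge \sum_{i\in S} 1 = |S|$ (using $r_i \le \ell-1$). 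If each $|S_t| < \H(k) t \nicefrac nk$ then $|S| \le \sum_t |S_t| < \H(k)\nicefrac nk \sum_{t=1}^\ell t$, which is too weak. The right weighting must bring in the harmonic number: one should instead bound, for the specific $t = r_i + 1$ layer, or use that $S_t \supseteq \{i: r_i < t\}$ and play the cohesiveness factor $\H(k)$ against $\H(\ell)$ or $\H(k)$. I expect the intended argument divides $S$ according to $r_i$ and shows that the layer with the best ratio $|S_t|/t$ already exceeds $\H(k)\nicefrac nk$ because the overall ratio $|S|/\ell \ge \H(k)\nicefrac nk$ and harmonic-sum telescoping $\sum 1/t = \H(\ell) \le \H(k)$ converts an $\ell$-level guarantee into a single-level one — this is exactly the classical PAV/EJR "local-to-global" step. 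I would set up $\min_t \frac{|S_t|}{t}$ and derive $|S| = \sum_i(\ell - r_i)$... and match it against $\H(\ell)\cdot \min_t\frac{|S_t|}{t}$-type inequality.

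The main obstacle, then, is precisely this combinatorial lemma: converting the single aggregate cohesiveness inequality $|S|\ge \H(k)\ell\nicefrac nk$ into the existence of one level $t\le\ell$ with $|\{i\in S: r_i<t\}| \ge \H(k)\, t\, \nicefrac nk$, where the factor $\H(k)$ (rather than $\H(\ell)$) is exactly what makes the harmonic telescoping go through. Once that lemma is in hand, the contradiction is immediate: the witness $(S_t,t)$ with $S_t\subseteq N(c_j)$ forces OGCA to accept $c_j$ at its arrival time, so $c_j\in W$; but then running the same argument shows each of $c_1,\dots,c_\ell$ gets accepted (each, when it arrives, still has the cohesive witness since counts have only grown but are all still $<\ell$ for $S$, hence $<\ell$ worth of witnesses persist — one has to be slightly careful and re-run the level argument at each arrival, or argue that by the time all have arrived $S$ has $\ell$ common approved members in $W$). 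Either way we conclude $|A(i)\cap W|\ge\ell$ for every $i\in S$, contradicting the assumption that defined $S$. I would also double-check the edge case where OGCA's seat-filling at the end could interfere — but since that only adds candidates, it can only help satisfy EJR, never break it.
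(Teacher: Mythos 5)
There is a genuine gap, and it is exactly where the theorem's real content lies. The nontrivial part of the statement is that OGCA never accepts more than $k$ candidates, and you dismiss this with a circular remark ("once $k$ candidates are selected the rule simply stops accepting"). The rule as defined has no such cap: it accepts \emph{whenever} a witness pair $(S,\ell)$ with $|S| \geq \H(k)\cdot\ell\cdot\nicefrac{n}{k}$ exists, and if you added a cap the EJR argument would immediately break (a cohesive group's commonly approved candidate could then be rejected because the seats are full). The paper's proof spends essentially all of its effort here, via a budgeting argument: each accepted candidate costs $\nicefrac{n}{k}$, spread equally over its at least $\H(k)\cdot\ell\cdot\nicefrac{n}{k}$ approvers, so a voter pays at most $\frac{1}{\ell\H(k)}$ for the $\ell$-th candidate it helps buy; summing, each voter pays at most $\sum_{\ell=1}^{k}\frac{1}{\ell\H(k)} = 1$, the total budget is $n$, and hence at most $k$ candidates are accepted. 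The harmonic factor $\H(k)$ is precisely what makes this budget close (the companion lower-bound theorem shows that with $(1-\epsilon)\H(k)$ the count would exceed $k$), so this step cannot be waved away.

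Conversely, the part you treat as the "main obstacle" is not an obstacle at all. Under the contradiction hypothesis, \emph{every} voter $i \in S$ satisfies $|A(i)\cap W| < \ell$ for the final committee $W$, hence at the arrival time of any candidate $c \in \bigcap_{i\in S}A(i)$ each voter in $S$ approves fewer than $\ell$ already-selected candidates. Thus the pair $(S,\ell)$ itself is a uniform witness ($S\subseteq N(c)$, $|S|\geq \H(k)\cdot\ell\cdot\nicefrac{n}{k}$), and OGCA must accept $c$; since there are at least $\ell$ such candidates, every voter in $S$ ends with at least $\ell$ representatives, a contradiction. No layering over the counts $r_i$ and no pigeonhole lemma converting the cohesiveness bound to a lower level $t$ is needed — and the lemma you pose is left unproven, with your own inequalities admittedly not closing. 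As submitted, the proposal proves neither the size bound nor (in the form written) the representation guarantee.
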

\begin{proof}
The fact that the algorithm satisfies $\H(k)$-EJR follows directly from its definition, we will prove that it selects at most $k$ candidates using a budgeting argument.

With each candidate $c$ we associate the price of $\nicefrac{n}{k}$.
When the algorithm accepts $c$, its cost is spread equally among the voters who approve it. Notice that for any $\ell \in [k]$, each voter buys at most $\ell$ candidates forming an $H(k)$-$\ell$-cohesive group. For any such candidate $c,$ $\matthieu{|N(c)|} \geq \H(k) \cdot \ell \cdot\nicefrac{n}{k}.$ 

For $\ell=1$, each voter buys at most one candidate $c$ forming an $H(k)$-$1$-cohesive group, and in such a case the voter pays at most $\nicefrac{n}{k} \cdot \frac{1}{H(k) \cdot 1\cdot \nicefrac{n}{k}} = \frac{1}{H(k)}$. 

For $\ell=2$, a voter buys at most two candidates forming an $H(k)$-$2$-cohesive group. One of such candidates could have been bought before (as a candidate forming a $H(k)$-$1$-cohesive group), and the voter would pay $\frac{1}{H(k)}$ for it. For the second candidate, the voter would pay at most $\nicefrac{n}{k} \cdot \frac{1}{H(k) \cdot 2 \cdot \nicefrac{n}{k}} = \frac{1}{2H(k)}$. 

Repeating the reasoning for $\ell=1,\ldots, k$, we get that each voter paid at most: $\sum_{\ell=1}^k \nicefrac{1}{\ell H(k)} = 1.$ Thus, the total amount of money paid is at most equal to $n$. Since each candidate costs $\frac{n}{k},$ our algorithm could have selected at most $k$ candidates. 
\end{proof}

Interestingly, in terms of proportionality guarantees the Online Greedy Cohesive algorithm is optimal.

\begin{theorem}
For each $\epsilon > 0$ there exists no online ABC rule that would satisfy $(1 - \epsilon)\H(k)$-EJR.
\end{theorem}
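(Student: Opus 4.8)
The plan is to construct, for every small enough $\alpha < \H(k)$, an adversarial arrival sequence on which any online rule that is forced to satisfy $\alpha$-EJR must accept more than $k$ candidates, contradicting the requirement that the committee has size $k$. So I would fix $\epsilon > 0$, set $\alpha = (1-\epsilon)\H(k)$, and build an instance with $n$ voters and $k$ committee seats where $k$ is chosen large (depending on $\epsilon$) so that the harmonic sum loses the needed slack. The skeleton of the construction should be a ``laminar''-style sequence: first a batch of candidates each approved by a large group forcing $\ell=1$ acceptances, then candidates forcing $\ell=2$ acceptances among voters who are still underrepresented, and so on up to $\ell = k$, with the group sizes calibrated to exactly $\alpha \cdot \ell \cdot \nicefrac{n}{k}$ so that $\alpha$-EJR bites at each level.

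Concretely, I would partition $N$ into blocks and, at stage $\ell$, present $\ell$ identical candidates all approved exactly by a group $S_\ell$ of size $\lceil \alpha \ell \nicefrac{n}{k}\rceil$ chosen so that every voter in $S_\ell$ currently approves fewer than $\ell$ selected candidates (this is where the earlier stages must be arranged so that the ``deficiency'' is maintained — overlapping the $S_\ell$'s in a nested fashion is the natural way). Because $S_\ell$ is $\alpha$-$\ell$-cohesive at that moment and none of its members has $\ell$ approved winners, $\alpha$-EJR forces the rule to eventually accept a candidate that raises some member of $S_\ell$ to $\ell$ approvals; iterating, the rule is compelled to accept a certain number $a_\ell$ of candidates at stage $\ell$. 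The budgeting bookkeeping then mirrors the proof of Theorem~\ref{thm:logkejr} but in reverse: charging $\nicefrac{n}{k}$ per accepted candidate, the money that stage-$\ell$ acceptances cost the voters in $S_\ell$ is at least $a_\ell \cdot \nicefrac{n}{k}$, while the total wealth is only $n$; one shows $\sum_\ell a_\ell > k$ because each $a_\ell$ is about $\nicefrac{k}{\alpha \ell}$ and $\sum_{\ell=1}^k \nicefrac{1}{\alpha \ell} = \nicefrac{\H(k)}{\alpha} = \nicefrac{1}{1-\epsilon} > 1$.

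The main obstacle, I expect, is making the ``forcing'' argument airtight: at stage $\ell$ the online rule may already have accepted candidates for reasons of its own, and it may try to satisfy the $\alpha$-$\ell$-cohesive group $S_\ell$ using winners accepted at earlier stages rather than fresh ones, so I cannot simply count ``new'' acceptances stage by stage. The fix is to define the instance so that the candidates of stage $\ell$ are approved by a group that is essentially \emph{disjoint} in its ``surplus'' from the groups of other stages — e.g. give each voter a private pool of candidates so that a winner can count toward at most one voter's $\ell$-th approval — and then run a global counting argument: sum over all voters $i$ the number $d_i$ of distinct winners $i$ approves; on one hand $\sum_i d_i = k \cdot (\text{average approval multiplicity})$ is controlled by the total number of (voter, winner) incidences, which is $\nicefrac{n}{k}$ times $\sum$(group sizes); on the other hand $\alpha$-EJR forces $\sum_i d_i$ to be large. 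Balancing these two estimates, with $k$ taken large enough that $\lceil \cdot \rceil$ rounding and lower-order terms are absorbed into the $\epsilon$ slack, yields $|W| > k$, the desired contradiction. I would also need to double-check that at each stage the relevant group genuinely is $\alpha$-$\ell$-cohesive at the time the rule must decide, i.e. that the $\ell$ common candidates have all already appeared (or appear within the stage before the decision is irrevocably forced), which constrains the order in which I reveal candidates within a stage.
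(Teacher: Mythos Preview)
Your overall plan matches the paper's: build an adaptive adversary that forces any rule satisfying $(1-\epsilon)\H(k)$-EJR to accept more than $k$ candidates, with the overshoot coming from $\sum_\ell \nicefrac{1}{\alpha\ell} = \nicefrac{\H(k)}{\alpha} > 1$. But your concrete construction does not produce the count you claim, and the proposal is internally inconsistent on exactly this point. You present, at stage $\ell$, $\ell$ candidates all approved by a \emph{single} group $S_\ell$ of size $\lceil \alpha\ell\,\nicefrac{n}{k}\rceil$. From one such group the rule can be forced to accept at most one candidate (once some member of $S_\ell$ reaches $\ell$ approved winners, the $\alpha$-$\ell$-EJR constraint for $S_\ell$ is discharged), certainly not $\nicefrac{k}{\alpha\ell}$. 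The number $\nicefrac{k}{\alpha\ell}$ is not a per-group quantity at all; it is the number of \emph{pairwise disjoint} groups of size $\alpha\ell\,\nicefrac{n}{k}$ that fit inside $[n]$. The paper's construction uses precisely this packing: in round $i$ it presents $m_i \approx \nicefrac{k}{i\alpha}$ candidates, each approved by a \emph{different} group of $i\alpha\,\nicefrac{n}{k}$ voters (disjoint within the round). One forced acceptance per group yields $\sum_i m_i \ge \nicefrac{k}{\alpha}\cdot\H(\lfloor\nicefrac{k}{\alpha}\rfloor) - \lfloor\nicefrac{k}{\alpha}\rfloor$, which the paper shows exceeds $k$ for $k$ large. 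No nesting or private pools are needed.

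The obstacle you raise in your last paragraph is genuine --- a voter in a round-$i$ group may already have approved winners from earlier rounds, so rejecting the round-$i$ candidate and padding with dummies need not violate EJR at level~$1$ --- but your proposed fixes (private candidate pools, a global incidence double count) are too vague to close it. The clean resolution, implicit in the paper's ``same argument as before'', is this: because the groups \emph{within} each earlier round are disjoint and each contributes at most one winner per voter, every voter has at most $i-1$ approved winners when round $i$ begins. Hence for any round-$i$ group $S$, the adversary can keep presenting candidates approved by $S$ (stopping the moment one is accepted); after at most $i$ of them $S$ is $\alpha$-$i$-cohesive while every member still has at most $i-1$ winners, so rejecting all of them would violate $\alpha$-EJR at level $i$. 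That forces one acceptance per group and the harmonic count goes through.
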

\begin{proof}
For the sake of contradiction assume that there exists an algorithm $\calA$ that satisfies $(1- \epsilon)\H(k)$-EJR for some rational $\epsilon > 0$. 

Let us fix $k$, and assume the number of voters $n$ is such that $(1- \epsilon)\cdot \nicefrac{n}{k}$ is an integer. In the first round there arrive candidates who are approved by $(1- \epsilon)\H(k) \cdot \nicefrac{n}{k}$ voters. Each such candidate is approved by a disjoint group of voters. Assume that the number of such candidates equals $m_1$, where: $m_1 = \left\lfloor\frac{n}{(1- \epsilon)\H(k) \cdot \nicefrac{n}{k}}\right\rfloor \geq \frac{k}{(1- \epsilon)\H(k)} - 1 \text{.}$ Note that each such a candidate must be selected by $\calA$. Indeed, if one of them were not selected, the algorithm could violate $(1- \epsilon)\H(k)$-EJR. This could happen, for example, if all the remaining candidates that have not yet arrived were approved by no voters. 

In the second round there arrive $m_2$ candidates, each approved by a different group of $2(1- \epsilon)\H(k) \cdot \nicefrac{n}{k}$ voters, where: $m_2 \geq \nicefrac{k}{2(1- \epsilon)\H(k)} - 1 \text{.}$
By the same argument as before, we infer that $\calA$ must accept each such a candidate. 

Analogously, in the $i$-th round, $i \leq \frac{k}{(1- \epsilon)\H(k)}$ there arrive $m_i$ candidates: $m_i \geq \frac{k}{i(1- \epsilon)\H(k)} - 1 \text{,}$
and each of them must be accepted by $\calA$. In total $\calA$ must have accepted the following number of candidates:
\begin{align*}
m = \sum\limits_{i = 1}^{\left\lfloor \frac{k}{(1- \epsilon)\H(k)} \right\rfloor} m_i \text{.}
\end{align*}
In the following sequence of estimations we use the fact that for each $i$ it holds that $ \log(i) \leq \H(i) \leq \log(i) +2$:
\begin{align*}
    m &= \sum_{i = 1}^{\left\lfloor \frac{k}{(1- \epsilon)\H(k)} \right\rfloor} m_i \geq \sum_{i = 1}^{\left\lfloor \frac{k}{(1- \epsilon)\H(k)} \right\rfloor} \left(\frac{k}{i(1- \epsilon)\H(k)} - 1\right) \\
    &\geq \frac{k}{(1- \epsilon)\H(k)} \cdot \H\left(\left\lfloor \frac{k}{(1- \epsilon)\H(k)} \right\rfloor\right) - \left\lfloor \frac{k}{(1- \epsilon)\H(k)} \right\rfloor \\
    &\geq \frac{k}{(1- \epsilon)\H(k)} \cdot \left(\log\left(\frac{k}{(1- \epsilon)\H(k)} \right) - 2 \right) \\
    &\geq \frac{k}{(1- \epsilon)\H(k)} \cdot \Big(\H(k) -\log\big((1- \epsilon)\H(k)\big) - 4\Big) \\
    &\geq \frac{k}{(1- \epsilon)}\left(1 - \frac{\log\big(\H(k)\big) + 4}{\H(k)} \right)\text{.}
\end{align*}
Note that $\lim_{k \to \infty} \frac{\log\big(\H(k)\big) + 4}{\H(k)} = 0$, thus for sufficiently large $k$ the last expression in the sequence of inequalities is larger than $k$. We infer that $\calA$ would need to select more than $k$ candidates, a contradiction. 
\end{proof}


\subsection{A Polynomial-Time Algorithm Satisfying $\w(k)^2$-EJR}

The OGCA algorithm presented in Section \ref{sec:ogca} cannot be computed in a polynomial time. This is because checking if there exists an $\ell$-cohesive group is $\np$-hard~\citep{proprank}. In this section we define an algorithm that runs in polynomial time, and which offers only a slightly worse EJR guarantee than OGCA.

Our algorithm, which we call Subcommittees via Greedy Budgeting Rule (SGBR), is defined as follows. Let $\alpha = \lceil\w(k)\rceil$. The idea is to independently elect $\alpha$ smaller committees, each of size $k' = \lfloor \nicefrac{k}{\alpha} \rfloor$. We elect the $i$-th subcommittee, $i \in [\alpha]$, using the Greedy Budgeting Rule, but with a constraint that we can pick only the candidates who are approved by at least $\frac{n \alpha^i}{k}$ voters.

Formally, we assume that each voter is given an initial budget of $(1,\ldots, 1) \in [0,1]^{\alpha}$, that is $\alpha$ independent coins. The $i$-th coin can be used for buying the candidates who are approved by at least $\frac{n \alpha^i}{k}$ voters. Each candidate costs $\frac{n\alpha}{k}$ coins. When a candidate $c \in C$ arrives, we find the largest pair $i \in N$ and $S \subseteq N(c)$ (we first maximize $i$, and second $|S|$) such that:  
\begin{inparaenum}[(1)]
\item $|S| \geq \frac{n \alpha^i}{k}$,
\item Each voter from $S$ has at least $\frac{n\alpha}{k |S|}$ coins of type $i$ left. That is, those voters can afford to buy candidate $c$ assuming each of them paid with the coins of type $i$, and each would pay the same amount of money.
\end{inparaenum}
If such pair $(i, S)$ does not exist, we reject $c$. Otherwise, $c$ is accepted and we ask each voter from $S$ to pay $\frac{n\alpha}{k |S|}$ for $c$.

Since each voter has in total $\alpha$ coins, and buying each candidate costs $\frac{n\alpha}{k}$ the algorithm selects $\leq k$ candidates.  



\begin{theorem}
SGBR satisfies $\lceil\w(k)\rceil^2$-EJR.
\end{theorem}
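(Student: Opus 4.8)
The plan is to mimic the budgeting argument behind OGCA (Theorem~\ref{thm:logkejr}), but to exploit the rigid ``level'' structure of SGBR so as to avoid the NP-hard cohesiveness test. Write $\alpha := \lceil\w(k)\rceil$, and recall two facts I will use repeatedly: since $x\mapsto x^x$ is increasing and $\alpha\ge\w(k)$, we have $\alpha^{\alpha}\ge\w(k)^{\w(k)}=k$; and SGBR selects at most $k$ candidates (argued just before the statement). I would argue by contradiction: assume the returned committee $W$ violates $\alpha^2$-EJR, i.e.\ there are $\ell\in[k]$ and a group $S$ with $|S|\ge\alpha^2\ell\cdot\nicefrac nk$, $|\bigcap_{i\in S}A(i)|\ge\ell$, and $|A(i)\cap W|<\ell$ for every $i\in S$. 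A first, easy reduction: since $|S|\le n$, such a group can exist only if $\alpha^2\ell\le k$, i.e.\ $\ell\le\nicefrac{k}{\alpha^2}\le\alpha^{\alpha-2}$ (using $k\le\alpha^\alpha$).

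Next I would pin down the coin type that ``should'' buy the common candidates of $S$. Let $j=\lfloor\log_\alpha\ell\rfloor$ and $i=j+2$, so that $\alpha^{j}\le\ell<\alpha^{j+1}$ (hence $\alpha^{j+1}\ge\ell+1$, these being integers), $\alpha^{i}=\alpha^{2}\alpha^{j}\le\alpha^{2}\ell$, and $i=j+2\le\alpha$ (by $\ell\le\alpha^{\alpha-2}$), so $i$ is a legal coin type. Fix $\ell$ candidates $c_1,\dots,c_\ell\in\bigcap_{i\in S}A(i)$, ordered by arrival, and suppose some of them is rejected; let $c$ be the first such. I claim that when $c$ arrives, the pair $(i,S')$ with $S'\subseteq S$ any $\lceil n\alpha^{i}/k\rceil$ voters of $S$ is feasible, which forces the algorithm to accept $c$ --- a contradiction. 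Indeed $S'\subseteq S\subseteq N(c)$ is possible because $\lceil n\alpha^{i}/k\rceil\le\alpha^2\ell\cdot\nicefrac nk\le|S|$; condition~(1) holds by construction; and for condition~(2), every voter of $S$ has spent strictly less than $\ell\cdot\alpha^{1-i}$ of its type-$i$ coin --- it has paid for at most $|A(i)\cap W|<\ell$ candidates in total, and each type-$i$ payment is at most $\tfrac{n\alpha}{k}\big/\tfrac{n\alpha^{i}}{k}=\alpha^{1-i}$ --- so it still owns more than $1-\ell\alpha^{1-i}\ge\alpha^{1-i}\ge\tfrac{n\alpha}{k|S'|}$ of it, where the middle inequality is exactly $\alpha^{i-1}\ge\ell+1$. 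Hence all of $c_1,\dots,c_\ell$ are accepted and lie in $W$, so every $i\in S$ has $|A(i)\cap W|\ge\ell$, contradicting the choice of $S$, and SGBR satisfies $\alpha^2$-EJR.

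I expect the crux to be this joint calibration of the level $i$: it must be simultaneously (a) large enough that a voter who bought fewer than $\ell$ candidates cannot have exhausted its type-$i$ coin (this needs $\alpha^{i-1}\ge\ell+1$), (b) small enough that the common support $S$ of the rejected candidate already meets the size threshold $n\alpha^{i}/k$ of level $i$ (this needs $\alpha^{i}\le\alpha^2\ell$), and (c) at most $\alpha$, so that it is an actual coin type. Making all three hold at once is precisely what forces the $\alpha^2$ blow-up, and is where the identity $\alpha^{\alpha}\ge k$ (equivalently $\alpha=\lceil\w(k)\rceil$) together with the reduction $\ell\le\nicefrac{k}{\alpha^2}$ gets used. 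A secondary nuisance is the bookkeeping with the floors and ceilings in the thresholds $\nicefrac{n\alpha^{i}}{k}$ and $k'=\lfloor\nicefrac k\alpha\rfloor$, and the degenerate case $k=1$ (where $\alpha=1$ and the statement reduces to EJR for single-winner elections), which I would handle separately.
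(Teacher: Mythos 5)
Your proof is correct and takes essentially the same route as the paper's: a contradiction via the per-level budgeting argument, in which the coin type is calibrated (your $i=\lfloor\log_\alpha\ell\rfloor+2$ versus the paper's level $j$ chosen from $|S|$, which is the same quantity up to rounding) so that each payment at that level is at most $\alpha^{1-i}$, a voter approving fewer than $\ell$ winners retains at least $\alpha^{1-i}$ of its type-$i$ coin, and $S$ meets the level-$i$ size threshold, forcing every candidate in $\bigcap_{v\in S}A(v)$ to be accepted. Your explicit check that the level is a legal coin type (via $\ell\le k/\alpha^2\le\alpha^{\alpha-2}$, hence $i\le\alpha$) and your handling of the $k=1$ case make precise steps the paper leaves implicit, but the argument is the same.
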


\begin{proof}
For the sake of contradiction assume that given an election $E = (C, N, k)$ Subcommittees via Greedy Budgeting returns a committee $W$ that fails $\alpha^2$-EJR. Let $S$ be a subset of $\alpha^2$-$\ell$-cohesive voters such that $|S| = \alpha^2 \ell \frac{n}{k}$ and $|\cap_{i \in S} A(i)|\geq \ell$ for some $\ell \in [k]$, and that for all $v \in S$, we have $|A(v)\cap W|<\ell$.

There exists $j \in [\alpha]$ such that $\frac{n}{k}\alpha^j \leq |S| \leq \frac{n}{k}\alpha^{j+1}$. From that it follows that $\ell \leq \alpha^{j-1}$. Let $W_j$ be the $j$-th subcommittee. For each elected candidate from $W_j$ a single voter can pay at most:
\begin{align*}
\frac{\frac{n\alpha}{k}}{\frac{n}{k}\alpha^j}= \frac{1}{\alpha^{j-1}}\leq \frac{1}{\ell}.
\end{align*}

Since each voter from $S$ approves at most $(\ell-1)$ candidates in $W_j$, they paid at most $(\ell-1)\frac{1}{\ell}$ and their remaining budget is greater than $\frac{1}{\ell}\geq \frac{1}{\alpha^{j-1}}$. Thus, when a candidate from $\cap_{i\in S} A(i) \backslash W_j$ appears there are at least $|S|$ voters, $|S| \geq \frac{n}{k} \alpha^{j}$, each having at least $\frac{1}{\alpha^{j-1}}$ coins of type $j$ left. Thus, their total budget is sufficient to buy the candidate: $|S| \cdot \nicefrac{1}{\alpha^{j-1}} \geq \nicefrac{n \alpha^j}{k} \cdot \nicefrac{1}{\alpha^{j-1}} = \nicefrac{n \alpha}{k}\,\text{.} $

Consequently, each candidate from $\cap_{i\in S} A(i)$ would be selected. There are at least $\ell$ such candidates, thus $|A(v)\cap W| \geq \ell$ for each $v \in S$. This gives a contradiction and completes the proof.
\end{proof}

\begin{remark}
While EJR always implies PJR, $\alpha$-EJR does not necessarily imply PJR. Still, $\alpha$-EJR provides desirable guarantees even when PJR is not satisfied. PJR is a property that is often used both in the context of proportionality and of diversity. If our primary focus is to provide voters from large cohesive groups with multiple representatives, then $\alpha$-EJR should be considered. For example, assume there are $n$ voters, $2n$ candidates and $k = n$ is the desired committee size. Assume that each of the first $n$ candidates that arrive is approved by exactly one voter, and each such candidate is approved by a different voter. Further, assume that each of the next $n$ candidates is approved by all $n$ voters. Then the committee that consists of the first $n$ candidates satisfies PJR (in fact, this would be the committee selected by the Greedy Budgeting Rule). Such a committee would be bad from the perspective of $\alpha$-EJR, since each voter would have only a single representative. In this case, our algorithms for $\alpha$-EJR would return clearly better committees. 
\end{remark}

\section{Conclusion}\label{sec:conclu}

Our main message is that online approval-based committee elections are easier than we might have thought. First, for Thiele rules with submodular score functions we have a constant-factor approximation algorithm (obtained as a corollary of a known result), and a dynamic programming algorithm for maximizing the expected score, which runs in polynomial time for some rules or under some assumptions.
Second, we have an algorithm that returns a committee satisfying PJR, and two that return an approximate EJR committee: one with an optimal ratio, the other one running in polynomial-time. 

A further work direction consists in moving to ordinal preferences, where voters rank the current candidates in the list of candidates already observed. 

\newpage

\section*{Acknowledgements} This work was funded in part by the French government under management of Agence Nationale de la Recherche as part of the "Investissements d'avenir" program, reference ANR-19-P3IA-0001 (PRAIRIE 3IA Institute). We also thank the IJCAI-22 anonymous reviewers for useful comments and suggestions.

\bibliography{references}

\newpage
\appendix

\section[A]{Details for Section 4.2}

{\bf Proposition 4.1} {\em For $\mav$, an optimal safe policy can be computed in time $O(n^2km)$.}

\begin{proof}
 We show that we can express an optimal safe policy as a mapping from any state to an action $yes$ or $no$, and that it can be computed by the following dynamic programming algorithm.
 Let $P_j = {n \choose j}p^j (1-p)^{n-j}$.
 \medskip
 
Recall that a state is a triple  $s = (\alpha, \beta, \gamma)$,  where
\begin{itemize}
\item $\alpha \in \{1, \ldots, m\}$ is the number of candidates seen so far, including the currently observed candidate.
\item $\beta \in \{0, \ldots, \min(k,\alpha-1)\}$ is the number of candidates selected so far.
\item $\gamma \in \{0, \ldots, n\}$ is the number of voters who approve the current candidate.
\end{itemize}

The number of states is $(n+1)(k+1)\left(m+1 - \frac{k}{2}\right)$. 
A state is full if $\beta = k$ and tight if $\beta + m-\alpha +1 = k$. A safe policy must map every full state to $no$ and every tight state to $yes$. Note that for $\alpha = m$, a state obtained by following a safe policy is either tight or full. \medskip

We prove that Algorithm \ref{alg:opt-sav}  computes an optimal safe policy.


\begin{algorithm}\label{alg:opt-sav}
\caption{Optimal policy, MAV}
\begin{algorithmic}
\FOR{$\gamma \in 0, \ldots n$} 
\STATE ($\star$1)  $\pi(m,k,\gamma) = no$; $V^*(m,k,\gamma) = 0$
\STATE ($\star$2) $\pi(m,k-1,\gamma) = yes$; $V^*(m,k-1,\gamma) = \gamma$
\ENDFOR
\FOR{$\alpha = m-1, m-2, \ldots, 1$} 
\FOR{$\beta = 0, \ldots, \min(k,\alpha-1)$} 
\FOR{$\gamma = 0, \ldots, n$} 
\IF{$\beta + m-\alpha +1 = k$} \STATE ($\star$3) $\pi(\alpha,\beta,\gamma) = yes$
\ELSE
\IF{$\beta = k$} \STATE ($\star$4) $\pi(\alpha,\beta,\gamma) = no$
\ELSE
\STATE ($\star$5) $V(\alpha,\beta,\gamma, no) = \sum_{j=0}^n P_j V^*(\alpha+1,\beta,j)$
\STATE ($\star$6) $V(\alpha,\beta,\gamma, yes) = \gamma + \sum_{j=0}^n P_j V^*(\alpha+1,\beta+1,j)$
\STATE $V^*(\alpha,\beta,\gamma) = \max\left( V(\alpha,\beta,\gamma, no), V(\alpha,\beta,\gamma, yes) \right)$
\STATE ($\star$7) $\pi(\alpha,\beta,\gamma) = argmax_{a \in \{yes, no\}} V(\alpha,\beta,\gamma,a)$
\ENDIF
\ENDIF
\ENDFOR 
\ENDFOR 
\ENDFOR 
\RETURN $\pi$
\end{algorithmic}
\end{algorithm}
\medskip

Let $S_{\alpha \rightarrow m} = \{(\alpha', \beta, \gamma) \in S: \alpha' \geq \alpha\}$. A {\em $\alpha$-partial policy} $\pi_{\alpha \rightarrow m}$ maps every state in $S_{\alpha \rightarrow m}$ to an action. 

($\star$1), ($\star$2), ($\star$3) and ($\star$4) ensure that the returned policy is safe: whenever the number of remaining candidates (including the current one) $m-\alpha+1$ plus the number of selected candidates $\beta$ is equal to $k$, all remaining candidates are selected, and whenever we have already selected $k$ candidates, we never select another one. This ensures that $k-m+\alpha-1 \leq \beta \leq k$ holds at any stage of the execution of the algorithm (in particular, for $\alpha = m$, the only possible values of $\beta$ are $k-1$ and $k$).

We now prove by backward induction on $\alpha$ that (H) for each state $(\alpha, \beta, \gamma)$, the partial policy $\pi_{\alpha \rightarrow m}$ defined as the restriction of $\pi$ (output of the algorithm) to  $S_{\alpha \rightarrow m}$, is optimal among all $\alpha$-partial policies, and therefore $V^*(\alpha,\beta,\gamma)$ is the optimal score of a safe $\alpha$-partial policy.

When $\alpha = m$, this is obvious: if we have selected $k$ candidates already, the only possible action is $no$; and if we have selected $k-1$ candidates, the last candidate in the pool must be selected (and the reward is $\gamma$).

Assume now that (H) holds for $m, \ldots, \alpha+1$. 
If we choose not to select the current candidate in state $(\alpha,\beta,\gamma)$, and follow $\pi$ from $\alpha+1$ to $m$, then ($\star$5) the expected score of the output policy will be $\sum_{i=0}^n P_i V^*(\alpha+1,\beta,i)$, since at stage $\alpha+1$, there are still $\beta$ candidates selected, and the probability that the next state is $(\alpha+1,\beta+1,j)$ is $P_j$.


If we choose to select the current candidate in state $(\alpha,\beta,\gamma)$, and follow $\pi$ from $\alpha+1$ to $m$, then ($\star$6) the expected score of the output policy will be $\gamma + \sum_{j=0}^n P_j V^*(\alpha+1,\beta+1,j)$: because $\fsc_{MAV}$ is additive, the $\gamma$ approvals for the selected candidate contribute $\gamma$ to the final score of the policy, and the next state will be $(\alpha+1,\beta+1,j)$ with probability $P_j$.

Finally, ($\star 7$), together with the fact that $\pi_{\alpha+1 \rightarrow m}$ is optimal, ensures that $\pi_{\alpha \rightarrow m}$ is optimal, that is, H holds for $\alpha$. This ensures the optimality of the output policy.

Finally, the algorithm iterates for all states, and for each state, $(\star 4)$ and $(\star 5)$ take $O(n)$ operations, therefore the algorithm runs in $O(n^2km)$.
\end{proof}




\begin{example}
Let $m = 4$, $k = 2$, $n = 3$, $p_1 = p_2 = p_3 = \frac12$ and assume votes are independent, thus $P_0 = P_3 = \frac18, P_1 = P_2 = \frac38$. The optimal policy and the values $V^*(.)$ are depicted on the table below.

\begin{center}
\begin{tabular}{c|c|c|c|c}
& $\gamma = 0$ &$\gamma = 1$ & $\gamma = 2$ & $\gamma = 3$ \\ \hline
$(\alpha,\beta) = (4,2)$ & no, $0$ & no, $0$ & no, $0$ & no, $0$\\
$(\alpha,\beta) = (4,1)$ & yes, $0$ & yes, $1$ & yes, $2$ & yes, $3$ \\
$(\alpha,\beta) = (3,2)$ & no, $0$ & no, $0$ & no, $0$ & no, $0$\\
$(\alpha,\beta) = (3,1)$ & no, $\nicefrac32$ & no, $\nicefrac32$ & yes, $2$ & yes, $3$\\
$(\alpha,\beta) = (3,0)$ & yes, $\nicefrac32$ & yes, $\nicefrac52$ & yes, $\nicefrac72$ & yes, $\nicefrac92$\\
$(\alpha,\beta) = (2,1)$ & no, $\nicefrac{15}{8}$ &  no, $\nicefrac{15}{8}$ & yes, $2$ & yes, $3$\\
$(\alpha,\beta) = (2,0)$ & yes, $\nicefrac{21}{8}$ &yes, $\nicefrac{23}{8}$ &yes, $\nicefrac{31}{8}$ &yes, $\nicefrac{39}{8}$  \\
$(\alpha,\beta) = (1,0)$ & no, $\nicefrac{75}{16}$ & no, $\nicefrac{75}{16}$& no, $\nicefrac{75}{16}$& yes, $\nicefrac{81}{16}$  \\
\end{tabular}
\end{center}
Assume the first observed candidate $x_1$ receives two approvals: $(\alpha, \beta,\gamma) = (1,0,2)$, $x_1$ is not selected. Now the second observed candidate $x_2$ receives two approvals as well: $(\alpha, \beta,\gamma) = (2,0,2)$, $x_2$ is selected. Now the third observed candidate $x_3$ receives one approval as well: $(\alpha, \beta,\gamma) = (3,1,1)$, $x_3$ is not selected, and the last candidate $x_4$ is selected no matter how many approvals it receives.  Assume it receives one approval, then the output committee $\{x_2,x_4\}$ has total score 3 whereas the optimal committee (evaluated offline) is $\{x_1,x_2\}$, with score 4.
\end{example}

The algorithm runs in polynomial time as long as the $P_j$'s can be computed in polynomial time. This is the case for our assumption but also if, for instance, the voters are associated with different approval probabilities $p_i$, but can be clustered in a fixed number of types $\{t_j: j = 1, \ldots, Q\}$, each with a specific $p_j$.\medskip


\end{document}


\maketitle

\section[A]{Details for Section 4.2}

{\bf Proposition 4.1} {\em For $\mav$, an optimal safe policy can be computed in time $O(n^2km)$.}

\begin{proof}
 We show that we can express an optimal safe policy as a mapping from any state to an action $yes$ or $no$, and that it can be computed by the following dynamic programming algorithm.
 Let $P_j = {n \choose j}p^j (1-p)^{n-j}$.
 \medskip
 
Recall that a state is a triple  $s = (\alpha, \beta, \gamma)$,  where
\begin{itemize}
\item $\alpha \in \{1, \ldots, m\}$ is the number of candidates seen so far, including the currently observed candidate.
\item $\beta \in \{0, \ldots, \min(k,\alpha-1)\}$ is the number of candidates selected so far.
\item $\gamma \in \{0, \ldots, n\}$ is the number of voters who approve the current candidate.
\end{itemize}

The number of states is $(n+1)(k+1)\left(m+1 - \frac{k}{2}\right)$. 
A state is full if $\beta = k$ and tight if $\beta + m-\alpha +1 = k$. A safe policy must map every full state to $no$ and every tight state to $yes$. Note that for $\alpha = m$, a state obtained by following a safe policy is either tight or full. \medskip

We prove that Algorithm \ref{alg:opt-sav}  computes an optimal safe policy.


\begin{algorithm}\label{alg:opt-sav}
\caption{Optimal policy, MAV}
\begin{algorithmic}
\FOR{$\gamma \in 0, \ldots n$} 
\STATE ($\star$1)  $\pi(m,k,\gamma) = no$; $V^*(m,k,\gamma) = 0$
\STATE ($\star$2) $\pi(m,k-1,\gamma) = yes$; $V^*(m,k-1,\gamma) = \gamma$
\ENDFOR
\FOR{$\alpha = m-1, m-2, \ldots, 1$} 
\FOR{$\beta = 0, \ldots, \min(k,\alpha-1)$} 
\FOR{$\gamma = 0, \ldots, n$} 
\IF{$\beta + m-\alpha +1 = k$} \STATE ($\star$3) $\pi(\alpha,\beta,\gamma) = yes$
\ELSE
\IF{$\beta = k$} \STATE ($\star$4) $\pi(\alpha,\beta,\gamma) = no$
\ELSE
\STATE ($\star$5) $V(\alpha,\beta,\gamma, no) = \sum_{j=0}^n P_j V^*(\alpha+1,\beta,j)$
\STATE ($\star$6) $V(\alpha,\beta,\gamma, yes) = \gamma + \sum_{j=0}^n P_j V^*(\alpha+1,\beta+1,j)$
\STATE $V^*(\alpha,\beta,\gamma) = \max\left( V(\alpha,\beta,\gamma, no), V(\alpha,\beta,\gamma, yes) \right)$
\STATE ($\star$7) $\pi(\alpha,\beta,\gamma) = argmax_{a \in \{yes, no\}} V(\alpha,\beta,\gamma,a)$
\ENDIF
\ENDIF
\ENDFOR 
\ENDFOR 
\ENDFOR 
\RETURN $\pi$
\end{algorithmic}
\end{algorithm}
\medskip

Let $S_{\alpha \rightarrow m} = \{(\alpha', \beta, \gamma) \in S: \alpha' \geq \alpha\}$. A {\em $\alpha$-partial policy} $\pi_{\alpha \rightarrow m}$ maps every state in $S_{\alpha \rightarrow m}$ to an action. 

($\star$1), ($\star$2), ($\star$3) and ($\star$4) ensure that the returned policy is safe: whenever the number of remaining candidates (including the current one) $m-\alpha+1$ plus the number of selected candidates $\beta$ is equal to $k$, all remaining candidates are selected, and whenever we have already selected $k$ candidates, we never select another one. This ensures that $k-m+\alpha-1 \leq \beta \leq k$ holds at any stage of the execution of the algorithm (in particular, for $\alpha = m$, the only possible values of $\beta$ are $k-1$ and $k$).

We now prove by backward induction on $\alpha$ that (H) for each state $(\alpha, \beta, \gamma)$, the partial policy $\pi_{\alpha \rightarrow m}$ defined as the restriction of $\pi$ (output of the algorithm) to  $S_{\alpha \rightarrow m}$, is optimal among all $\alpha$-partial policies, and therefore $V^*(\alpha,\beta,\gamma)$ is the optimal score of a safe $\alpha$-partial policy.

When $\alpha = m$, this is obvious: if we have selected $k$ candidates already, the only possible action is $no$; and if we have selected $k-1$ candidates, the last candidate in the pool must be selected (and the reward is $\gamma$).

Assume now that (H) holds for $m, \ldots, \alpha+1$. 
If we choose not to select the current candidate in state $(\alpha,\beta,\gamma)$, and follow $\pi$ from $\alpha+1$ to $m$, then ($\star$5) the expected score of the output policy will be $\sum_{i=0}^n P_i V^*(\alpha+1,\beta,i)$, since at stage $\alpha+1$, there are still $\beta$ candidates selected, and the probability that the next state is $(\alpha+1,\beta+1,j)$ is $P_j$.


If we choose to select the current candidate in state $(\alpha,\beta,\gamma)$, and follow $\pi$ from $\alpha+1$ to $m$, then ($\star$6) the expected score of the output policy will be $\gamma + \sum_{j=0}^n P_j V^*(\alpha+1,\beta+1,j)$: because $\fsc_{MAV}$ is additive, the $\gamma$ approvals for the selected candidate contribute $\gamma$ to the final score of the policy, and the next state will be $(\alpha+1,\beta+1,j)$ with probability $P_j$.

Finally, ($\star 7$), together with the fact that $\pi_{\alpha+1 \rightarrow m}$ is optimal, ensures that $\pi_{\alpha \rightarrow m}$ is optimal, that is, H holds for $\alpha$. This ensures the optimality of the output policy.

Finally, the algorithm iterates for all states, and for each state, $(\star 4)$ and $(\star 5)$ take $O(n)$ operations, therefore the algorithm runs in $O(n^2km)$.
\end{proof}




\begin{example}
Let $m = 4$, $k = 2$, $n = 3$, $p_1 = p_2 = p_3 = \frac12$ and assume votes are independent, thus $P_0 = P_3 = \frac18, P_1 = P_2 = \frac38$. The optimal policy and the values $V^*(.)$ are depicted on the table below.

\begin{center}
\begin{tabular}{c|c|c|c|c}
& $\gamma = 0$ &$\gamma = 1$ & $\gamma = 2$ & $\gamma = 3$ \\ \hline
$(\alpha,\beta) = (4,2)$ & no, $0$ & no, $0$ & no, $0$ & no, $0$\\
$(\alpha,\beta) = (4,1)$ & yes, $0$ & yes, $1$ & yes, $2$ & yes, $3$ \\
$(\alpha,\beta) = (3,2)$ & no, $0$ & no, $0$ & no, $0$ & no, $0$\\
$(\alpha,\beta) = (3,1)$ & no, $\nicefrac32$ & no, $\nicefrac32$ & yes, $2$ & yes, $3$\\
$(\alpha,\beta) = (3,0)$ & yes, $\nicefrac32$ & yes, $\nicefrac52$ & yes, $\nicefrac72$ & yes, $\nicefrac92$\\
$(\alpha,\beta) = (2,1)$ & no, $\nicefrac{15}{8}$ &  no, $\nicefrac{15}{8}$ & yes, $2$ & yes, $3$\\
$(\alpha,\beta) = (2,0)$ & yes, $\nicefrac{21}{8}$ &yes, $\nicefrac{23}{8}$ &yes, $\nicefrac{31}{8}$ &yes, $\nicefrac{39}{8}$  \\
$(\alpha,\beta) = (1,0)$ & no, $\nicefrac{75}{16}$ & no, $\nicefrac{75}{16}$& no, $\nicefrac{75}{16}$& yes, $\nicefrac{81}{16}$  \\
\end{tabular}
\end{center}
Assume the first observed candidate $x_1$ receives two approvals: $(\alpha, \beta,\gamma) = (1,0,2)$, $x_1$ is not selected. Now the second observed candidate $x_2$ receives two approvals as well: $(\alpha, \beta,\gamma) = (2,0,2)$, $x_2$ is selected. Now the third observed candidate $x_3$ receives one approval as well: $(\alpha, \beta,\gamma) = (3,1,1)$, $x_3$ is not selected, and the last candidate $x_4$ is selected no matter how many approvals it receives.  Assume it receives one approval, then the output committee $\{x_2,x_4\}$ has total score 3 whereas the optimal committee (evaluated offline) is $\{x_1,x_2\}$, with score 4.
\end{example}

The algorithm runs in polynomial time as long as the $P_j$'s can be computed in polynomial time. This is the case for our assumption but also if, for instance, the voters are associated with different approval probabilities $p_i$, but can be clustered in a fixed number of types $\{t_j: j = 1, \ldots, Q\}$, each with a specific $p_j$.\medskip
